\long\def\comment#1{}
\documentclass[onecolumn]{IEEEtran}
\usepackage{xspace,amsmath,epsfig,syntonly,psfrag}
\usepackage{wrapfig}

\DeclareSymbolFont{AMSa}{U}{msa}{m}{n}
\DeclareMathSymbol{\blacksquare}  {\mathord}{AMSa}{"04}

\newfont{\bb}{msbm10 scaled 1100}

\newcommand{\CC}{{\mathchoice {\setbox0=\hbox{$\displaystyle\rm C$}\hbox{\hbox
to0pt{\kern0.4\wd0\vrule height0.9\ht0\hss}\box0}}
{\setbox0=\hbox{$\textstyle\rm C$}\hbox{\hbox
to0pt{\kern0.4\wd0\vrule height0.9\ht0\hss}\box0}}
{\setbox0=\hbox{$\scriptstyle\rm C$}\hbox{\hbox
to0pt{\kern0.4\wd0\vrule height0.9\ht0\hss}\box0}}
{\setbox0=\hbox{$\scriptscriptstyle\rm C$}\hbox{\hbox
to0pt{\kern0.4\wd0\vrule height0.9\ht0\hss}\box0}}}}

\newcommand{\RR}{{\mathchoice {\setbox0=\hbox{$\displaystyle\rm R$}\hbox{\hbox
to0pt{\kern0.4\wd0\vrule height0.9\ht0\hss}\box0}}
{\setbox0=\hbox{$\textstyle\rm R$}\hbox{\hbox
to0pt{\kern0.4\wd0\vrule height0.9\ht0\hss}\box0}}
{\setbox0=\hbox{$\scriptstyle\rm R$}\hbox{\hbox
to0pt{\kern0.4\wd0\vrule height0.9\ht0\hss}\box0}}
{\setbox0=\hbox{$\scriptscriptstyle\rm R$}\hbox{\hbox
to0pt{\kern0.4\wd0\vrule height0.9\ht0\hss}\box0}}}}

\newcommand{\QQ}{{\mathchoice {\setbox0=\hbox{$\displaystyle\rm Q$}\hbox{\hbox
to0pt{\kern0.4\wd0\vrule height0.9\ht0\hss}\box0}}
{\setbox0=\hbox{$\textstyle\rm Q$}\hbox{\hbox
to0pt{\kern0.4\wd0\vrule height0.9\ht0\hss}\box0}}
{\setbox0=\hbox{$\scriptstyle\rm Q$}\hbox{\hbox
to0pt{\kern0.4\wd0\vrule height0.9\ht0\hss}\box0}}
{\setbox0=\hbox{$\scriptscriptstyle\rm Q$}\hbox{\hbox
to0pt{\kern0.4\wd0\vrule height0.9\ht0\hss}\box0}}}}

\newcommand{\ZZ}{{\mathchoice {\setbox0=\hbox{$\displaystyle\rm Z$}\hbox{\hbox
to0pt{\kern0.4\wd0\vrule height0.9\ht0\hss}\box0}}
{\setbox0=\hbox{$\textstyle\rm Z$}\hbox{\hbox
to0pt{\kern0.4\wd0\vrule height0.9\ht0\hss}\box0}}
{\setbox0=\hbox{$\scriptstyle\rm Z$}\hbox{\hbox
to0pt{\kern0.4\wd0\vrule height0.9\ht0\hss}\box0}}
{\setbox0=\hbox{$\scriptscriptstyle\rm Z$}\hbox{\hbox
to0pt{\kern0.4\wd0\vrule height0.9\ht0\hss}\box0}}}}

\newcommand{\KK}{{\mathchoice {\setbox0=\hbox{$\displaystyle\rm K$}\hbox{\hbox
to0pt{\kern0.4\wd0\vrule height0.9\ht0\hss}\box0}}
{\setbox0=\hbox{$\textstyle\rm K$}\hbox{\hbox
to0pt{\kern0.4\wd0\vrule height0.9\ht0\hss}\box0}}
{\setbox0=\hbox{$\scriptstyle\rm K$}\hbox{\hbox
to0pt{\kern0.4\wd0\vrule height0.9\ht0\hss}\box0}}
{\setbox0=\hbox{$\scriptscriptstyle\rm K$}\hbox{\hbox
to0pt{\kern0.4\wd0\vrule height0.9\ht0\hss}\box0}}}}

\newcommand{\EE}{{\mathchoice {\setbox0=\hbox{$\displaystyle\rm E$}\hbox{\hbox
to0pt{\kern0.4\wd0\vrule height0.9\ht0\hss}\box0}}
{\setbox0=\hbox{$\textstyle\rm E$}\hbox{\hbox
to0pt{\kern0.4\wd0\vrule height0.9\ht0\hss}\box0}}
{\setbox0=\hbox{$\scriptstyle\rm E$}\hbox{\hbox
to0pt{\kern0.4\wd0\vrule height0.9\ht0\hss}\box0}}
{\setbox0=\hbox{$\scriptscriptstyle\rm E$}\hbox{\hbox
to0pt{\kern0.4\wd0\vrule height0.9\ht0\hss}\box0}}}}

\newcommand{\FF}{{\mathchoice {\setbox0=\hbox{$\displaystyle\rm F$}\hbox{\hbox
to0pt{\kern0.4\wd0\vrule height0.9\ht0\hss}\box0}}
{\setbox0=\hbox{$\textstyle\rm F$}\hbox{\hbox
to0pt{\kern0.4\wd0\vrule height0.9\ht0\hss}\box0}}
{\setbox0=\hbox{$\scriptstyle\rm F$}\hbox{\hbox
to0pt{\kern0.4\wd0\vrule height0.9\ht0\hss}\box0}}
{\setbox0=\hbox{$\scriptscriptstyle\rm F$}\hbox{\hbox
to0pt{\kern0.4\wd0\vrule height0.9\ht0\hss}\box0}}}}

\newcommand{\GG}{{\mathchoice {\setbox0=\hbox{$\displaystyle\rm G$}\hbox{\hbox
to0pt{\kern0.4\wd0\vrule height0.9\ht0\hss}\box0}}
{\setbox0=\hbox{$\textstyle\rm G$}\hbox{\hbox
to0pt{\kern0.4\wd0\vrule height0.9\ht0\hss}\box0}}
{\setbox0=\hbox{$\scriptstyle\rm G$}\hbox{\hbox
to0pt{\kern0.4\wd0\vrule height0.9\ht0\hss}\box0}}
{\setbox0=\hbox{$\scriptscriptstyle\rm G$}\hbox{\hbox
to0pt{\kern0.4\wd0\vrule height0.9\ht0\hss}\box0}}}}

\newfont{\BBB}{msbm10 scaled 2200}


\newcommand{\uv}{{\bf u}}
\newcommand{\wv}{{\bf w}}
\newcommand{\vv}{{\bf v}}

\newcommand{\yv}{{\bf y}}

\newcommand{\zerov}{{\bf 0}}


\newcommand{\Gm}{{\bf G}}
\newcommand{\Hm}{{\bf H}}
\newcommand{\Id}{{\bf I}}

\newcommand{\Mm}{{\bf M}}

\newcommand{\Sm}{{\bf S}}
\newcommand{\Tm}{{\bf T}}

\newcommand{\Wm}{{\bf W}}

\newcommand{\Ym}{{\bf Y}}


\newcommand{\Ac}{{\cal A}}

\newcommand{\Lc}{{\cal L}}

\newcommand{\Sc}{{\cal S}}

\newcommand{\Mcb}{\boldmath {\cal M}}



\newcommand{\transp}{{{\sf T}}}

\long\def\comment#1{}
\newlength{\lwidth}%
\newcommand{\lettrine}[2]{%
\settowidth{\lwidth}{#2\kern2pt}%
\noindent\hangindent\lwidth\hangafter-#1\hskip-\lwidth%
\smash{\hbox to\lwidth{\raise7pt\vtop{\null\hbox{#2}}%
\hfill}}\ignorespaces}
\newfont{\HUGEfont}{cmr17 scaled \magstep5}  
\newcommand{\defines}{{\,\,\stackrel{\scriptscriptstyle \bigtriangleup}{=}\,\,}}

\newcommand{\beq}{\begin{equation}}
\newcommand{\enq}{\end{equation}}
\newcommand{\beqa}{\begin{eqnarray}}
\newcommand{\enqa}{\end{eqnarray}}

\newcommand{\IC}{\mbox{\bb C}}
\newcommand{\IF}{\mbox{\bb F}}

\newcommand{\IQ}{\mbox{\bb Q}}

\newtheorem{theorem}{Theorem}

\newenvironment{example}{{\bf Example}.}{\vspace{\baselineskip}}

\newcommand{\be}{\begin{eqnarray*}}
\newcommand{\ee}{\end{eqnarray*}}
\newcommand{\ben}{\begin{eqnarray}}
\newcommand{\een}{\end{eqnarray}}

\setcounter{page}{1}

\begin{document}

\title{A New Class of TAST Codes With A Simplified Tree Structure}

\author{
{Mohamed Oussama Damen$^1$,}
{Hesham El Gamal$^2$ }
{and Ahmed A. Badr$^3$}
\thanks{\protect\rule{0pt}{1.5em}$^1$ M.O. Damen is with the ECE department, University of Waterloo, On, Canada (modamen@ece.uwaterloo.ca).}
\thanks{\protect\rule{0pt}{1.5em}$^2$ H. El Gamal is with the
  Department of Electrical Engineering, the Ohio State University,
  Columbus, OH (helgamal@ee.eng.ohio-state.edu).}
\thanks{\protect\rule{0pt}{1.5em}$^3$ A.A. Badr is with the Wireless Intelligent Networks Center (WINC), Nile University, Cairo, Egypt (ahmed.atef.younes@gmail.com).}}

\renewcommand{\thepage}{}
\maketitle
\pagenumbering{arabic}


\begin{abstract}

We consider in this paper the design of full diversity and high rate space-time codes with moderate decoding complexity for arbitrary number of transmit and receive antennas and arbitrary input alphabets. We focus our attention to codes from the threaded algebraic space-time (TAST) framework since the latter includes most known full diversity space-time codes. We propose a new construction of the component single-input single-output (SISO) encoders such that the equivalent code matrix has an upper triangular form. We accomplish this task by designing each SISO encoder to create an ISI-channel in each thread. This, in turn, greatly simplifies the QR-decomposition of the composite channel and code matrix, which is essential for optimal or near-optimal tree search algorithms, such as the sequential decoder. 
\end{abstract}

\section{Introduction and System Model}\label{Intro}
Since the beginning of research on multiple-input multiple-output (MIMO) systems, there has been a growing interest in the construction of low complexity space-time coding (e.g., 
\cite{Tarokh:al98,alamouti,tirkkonen,liang,elgamal:damen02,jafarkhani,silver,Rajan1,Rajan2,hikmet} and references therein). However, high rate and low complexity are so far limited to two transmit antennas \cite{alamouti,silver,hikmet}, whereas for higher number of transmitters, the constructed codes are either low rate (like orthogonal or quasi-orthogonal designs \cite{alamouti,tirkkonen,jafarkhani,Rajan2}), have long delays (e.g., \cite{liang}) or have a considerable higher complexity than the orthogonal designs \cite{elgamal:damen02,Rajan1}. 
In this work, we investigate the Threaded Algebraic Space-Time (TAST) framework, proposed by the authors in \cite{elgamal:damen02} and which is shown to comprise most full diversity codes (e.g., orthogonal designs, codes from cyclic division algebra, etc. \cite{simpletast06}). The TAST codes are full rate, full diversity and decodable by algorithms that search trees for shortest paths such as sequential decoding. The tree of the coded MIMO system is usually constructed via QR decomposition of the composite channel and code matrix \cite{dec03}. 
Within the TAST framework, we investigate here the construction of codes that facilitate tree-searching decoding algorithms by simplifying the tree structure of the new proposed TAST codes.  

For our purposes, we consider a space-time block code ${\Sc}$ over $M$ transmit
antennas and $T$ channel uses.  The code words of $\Sm=\left[s_{m,t}\right]$ are $M\times T$ matrices in
which the $(m,t)$-th entry $s_{m,t}$ denotes the modulated code
symbol transmitted from the $m$-th transmit antenna at
the $t$-th symbol periods. If all nontrivial pairwise differences
between code word matrices are full rank (i.e., of rank $M$) over the complex
numbers $\IC$, then the space-time code is said to achieve full diversity.  

In this work, the transmitted symbols are 
generated from some finite constellations using algebraic
constructions (e.g., \cite{elgamal:damen02}). 
Let $\Ac$ denote the two-dimensional
quadrature amplitude modulation (QAM) alphabet, and let $\IF =
\IQ(i)$, ($i=\sqrt{-1})$, denote the field of complex rational
numbers. We let 
choose $\IF(\theta)$ be an extension field of degree
$[\IF(\theta):\IF]$ over the base field $\FF$ that contains $\Ac$. The rate of the code $\Sc$ is defined as the number of complex symbols from $\Ac$ transmitted per channel use.  

The paper is organized as follows. In Section \ref{TAST} we review the construction methodology of the TAST framework and we present our new construction in Section \ref{ST-TAST} before concluding in Section \ref{conc}. 

\section{The TAST Framework}\label{TAST}

\subsection{Codes Construction}
The basic idea of the TAST framework is to divide the space-time matrix into several threads (i.e., permutation matrices or concatenation of permutation matrices) and then assign a different full diversity SISO encoder to each thread. That makes each thread achieves full diversity in the absence of the other threads. Then, to achieve full diversity of the total code, one needs to project each SISO encoder into a different algebraic subspace by multiplying its elements by some roots of unity 
\cite{elgamal:damen02}. 
The threaded code structure was introduced by El Gamal and Hammons
\cite{bib:ElGHam01} as a generalization of the layers proposed by
Foschini \cite{Foschini96}.  In the aforementioned framework, a {\em layer\/} is a function
that assigns a unique transmit antenna to be used during the
symbol intervals that are available over a code word interval. A
layer is called a {\em thread\/} if the domain of the function
includes all possible symbol intervals and its range includes all
possible transmit antennas (i.e., a thread exploits all the spatial diversity of a MIMO system). 
For example, a MIMO system with $M$
transmit antennas and a code word interval of duration $T$ symbols
allows the simultaneous transmission of $M$ different threads.

An example of the threaded layering set 
${\Lc}$ is given in the following equation (with the convention that time indices span $[0,\,
T-1]$):
\begin{eqnarray}\label{thread-layering}
\ell_j = \left\{ (\left\lfloor t+j-1\right\rfloor_M + 1, t) : 0\le
t<T\right\} \mbox { for } 1\leq j \leq L, \label{thread}
\end{eqnarray}
where 
$\lfloor \cdot \rfloor_M$ denotes the mod-$M$ operation. 

Then, a TAST code consists of sending different SISO encoders into different threads.
For example, for $T=M$, full diversity full rate TAST codes can be constructed using the following SISO encoders \cite{elgamal:damen02}:  
\[
\gamma_j = \phi_j \Mm_j \uv_j, \, \, \, j=1,\ldots, M, 
\]
where $\Mm_j$ is an $M\times M$ full diversity constellation rotation, $\uv_j$ is an $M\times 1$ an information symbol vector with components from a given QAM constellation and $\phi_1, \ldots, \phi_j$ are some roots of unity that are algebraically independent over the number field containing the rotations $\Mm_j, j=1, \ldots, M$. 

The TAST framework contains most of the known full diversity space-time codes, such as orthogonal designs and codes from cyclic division algebra for example \cite{simpletast06}.  
An example of TAST codes is the Golden code (or its variant Matrix C in the WiMax standards)
\begin{eqnarray}\label{goldencode}
 \Gm \defines
\frac{1}{\sqrt{5}}\left[\begin{array}{cc}
\alpha(u_1 + \theta u_4) & \phi^{1/2} \sigma(\alpha(u_2+\theta u_3)) \\
\phi^{1/2} \alpha(u_2 + \theta u_3) & \sigma(\alpha(u_1 + \theta u_4))
\end{array}\right], \\ 
\nonumber \phantom{\defines} u_1, u_2, u_3, u_4 \in \, \mbox{QAM}
\end{eqnarray}
where  $\theta=\frac{1+\sqrt{5}}{2}$, $\phi =i$,
$\sigma(\theta) = 1-\theta$, and $\alpha = 1+i (1-\theta)$ is a
shaping coefficient in order to make the mapping unitary \cite{Golden,mahesh}.

\subsection{Decoding of TAST Codes}
Consider a TAST code with rate $R= \frac{K}{T}$ complex symbols per channel use, i.e., each $K\times 1$ QAM information symbol vector $\uv$ is associated with an $M\times T$ code word matrix $\Sm(\uv)$. When transmitting the TAST code word $\Sm(\uv)$ over a MIMO channel, the baseband received signal can be written in the form
\begin{equation}
\Ym = \Hm \Sm(\uv) + \Wm
\end{equation}
where $\Hm$ is the $N\times M$ channel matrix with i.i.d Rayleigh fading coefficients and $\Wm$ is an $N\times T$ matrix representing the additive white Gaussian noise with i.i.d. components. Writing the received signal in a vector form gives
\begin{equation}
\yv \defines \mbox{vec}({\Ym}) = (\Id_T \otimes \Hm) \, \Mcb \uv + \wv
\end{equation}
where $\Id_T$ is the $T\times T$ identity matrix, $\otimes$ is the Kronicker product, $\Mcb$ is the equivalent $MT \times K$ code matrix such that $\Mcb \uv = \mbox{vec}(\Sm(\uv))$. Maximum likelihood (ML) or near-ML decoding can then be done using QR decomposition on the composite $NT \times K$ matrix  $(\Id_T \otimes \Hm) \, \Mcb$ followed by sphere decoding or other type of sequential decoding algorithms that search for the shortest path on a tree such as the M-algorithm \cite{dec03}. 
Since the complexity of the QR decomposition is cubic in the matrix dimension, it can be prohibitive for large values of $T$ and $K$ (e.g., for full rate minimum delay TAST codes, $T=M$ and $K=M^2$). 
In the next section, we propose a new construction of TAST codes that simplifies the QR decomposition of $(\Id_T \otimes \Hm) \, \Mcb$ by designing $\Mcb$ to be in an upper triangular form (note that the QR decomposition of $(\Id_T \otimes \Hm)$ requires only the decomposition of the $N\times M$ channel matrix $\Hm$). 

\section{The Proposed Codes}\label{ST-TAST}

\subsection{Methodology and Examples} 
The new codes use the threading in \eqref{thread-layering} for $T = 2M+L-1$ where $L$ is a non negative integer that allows more tradeoff between complexity and rate as explained in the sequel. We encode $K=M(M + L)$ QAM information symbols, $\uv \defines (u_1, \ldots,u_K)^{\transp}$ into an $M\times T$ code word matrix $\Tm(\uv)$. 
In order to have an upper triangular equivalent code matrix, each SISO encoder creates an ISI-like channel in each thread. 
For simplicity of presentation, the code construction is explained first for $L=0$ (i.e., $T=2M-1$, and $K=M^2$). 
Let $\Sm(\uv)$ be a full diversity, full rate  $M\times M$ optimized TAST code (e.g., \cite{elgamal:damen02,simpletast06,Rajan1,perfectstc}). 
Then the new TAST code is constructed as follows.
 The first $M\times M$ block of the new code is given by
\begin{equation}\label{first}
\left(\Sm_1(u_1,u_2, \ldots, u_M,0,\ldots,0),\Sm_2(u_1,u_2,\ldots,u_{2M},0,\ldots,0),\ldots,\Sm_M(u_1,u_2,\ldots,u_{M^2})\right)
\end{equation}
where $\Sm_j$ is the $j$-th column of matrix $\Sm$. That is, at each symbol period, $M$ new QAM symbols enter the encoder until a total of $M^2$ symbols are encoded.  
The second and last $M\times (M-1)$ block is given by
\begin{equation}\label{last}
\left(\Sm_1(0,\ldots, 0, u_{M+1}, \ldots, u_{M^2}),\Sm_2(0,\ldots, 0, u_{2M+1},\ldots,u_{M^2}),\ldots,\Sm_{M-1}(0,\ldots,0,u_{M^2-M+1},\ldots,u_{M^2})\right).
\end{equation}
That is, at each symbol period, $M$ symbols that appeared $M$ times each in the code, exit the encoder. 

For example, for $M=2$ and $T=3$, using the Golden code \eqref{goldencode} as the constituent code, one has
\begin{equation}
\Tm(\uv) \defines \frac{1}{\sqrt{5}}\left[\begin{array}{ccc}
\alpha(u_1) & \phi^{1/2} \sigma(\alpha(u_2+\theta u_3)) & \alpha(\theta u_4) \\
\phi^{1/2} \alpha(u_2) & \sigma(\alpha(u_1 + \theta u_4)) & \phi^{1/2} \alpha(\theta u_3)
\end{array}\right],
\end{equation}
with $\alpha$, $\theta$, $\phi$ and $\sigma$ as in \eqref{goldencode}.
The rate of this code is $R = \frac{K}{T} = \frac{4}{3}$ complex symbols per channel use and it is easy to show that it achieves full diversity. The equivalent code matrix $\Mcb$ is upper triangular as can be seen from transforming $\Tm(\uv)$ into a vector by stacking its columns one after the other. 
\begin{equation}
\mbox{vec}(\Tm(\uv)) = \underbrace{\left( \begin{array}{cccc}
\alpha & 0 & 0 & 0 \\
0& \phi^{1/2} \alpha & 0 & 0 \\
0& \phi^{1/2} \sigma(\alpha) & \sigma(\alpha \theta) & 0 \\
\sigma(\alpha) & 0 & 0& \sigma(\alpha \theta) \\
0& 0 & 0& \alpha(\theta) \\
0&0&\phi^{1/2}\alpha\theta & 0 \end{array}\right)}_{\Mcb} \left(\begin{array}{c}
u_1\\
u_2\\
u_3\\
u_4\end{array}\right)
\end{equation}

To increase the rate while keeping the upper-triangular property of the equivalent code matrix $\Mcb$ and the full diversity of the code, one can add $L$ columns between the first and last block of the new TAST code, where for each added column, $M$ symbols exit the $M$ SISO encoders and $M$ new QAM symbols enter such that each of the $K$ QAM symbols {\bf appears exactly in} $M$ places {\bf in only one} of the $M$ thread. For example, for the above code with $L=1$, one obtains

\begin{equation}
\Tm(\uv) \defines \frac{1}{\sqrt{5}}\left[\begin{array}{ccccc}
\alpha(u_1) & \phi^{1/2} \sigma(\alpha(u_2+\theta u_3)) & \alpha(u_4 + \theta u_5) & \phi^{1/2} \sigma(\alpha \theta u_6)  \\
\phi^{1/2} \alpha(u_2) & \sigma(\alpha(u_1 + \theta u_4)) & \phi^{1/2} \alpha(u_3 + \theta u_6) &  \sigma(\alpha \theta u_5 )
\end{array}\right],
\end{equation}
which has a rate of $R=\frac{6}{4}$. 

The general construction is summarized as follows:
\begin{enumerate}
\item A $K\times 1$ QAM information symbol vector $\uv$ is encoded into an $M\times T$ space-time code word matrix $\Tm(\uv)$, where $K=M(M+L)$ and $T=2M+L-1$ with $L$ a non negative integer. 
\item The new $M\times T$ TAST code $\Tm$ is based on concatenating the columns of a constituent $M\times M$ full diversity and full rate TAST code $\Sm$ such that in the code word matrix $\Tm$, the columns of $\Sm$ are repeated cyclically, i.e., 
$\Tm_{j} = \Sm_{\lfloor j-1 \rfloor_M + 1}, \, \, j=1,\ldots, T$. 
\item After the first $M$ columns given in \eqref{first}, one can add $L$ columns to increase the rate and then one requires $M-1$ columns for termination (in order to guarantee the same diversity level for all the symbols). The addition of extra columns respects the following rule: {\bf each QAM symbol should appear exactly in $M$ positions in only one thread.} Mathematically, the $\ell$-th added column with $1 \leq \ell \leq L$ is of the form
\begin{equation}
\Tm_{M+\ell} = \Sm_{\lfloor M+\ell-1 \rfloor_M + 1}(u_{\ell M+1},\ldots,u_{\ell M+M^2})
\end{equation}
where at each columns, $M$ symbols that already appeared $M$ times each exit the encoder and $M$ new symbols enter.
\item The last $M-1$ columns are for termination where no new symbols enter the code and $M-1$ blocks of $M$ symbols exit the encoder ($M$ symbols at each column): 
\begin{eqnarray}\label{last-p}
\Tm_{T-M+2} &=& \Sm_{\lfloor T-M+1 \rfloor_M + 1} (0,\ldots, 0, u_{LM+M+1}, \ldots, u_{M^2+LM}) \\
\Tm_{T-M+3} &=& \Sm_{\lfloor T-M+2\rfloor_M+1} (0,\ldots, 0, u_{LM+ 2M+1},\ldots,u_{M^2+LM}) \\
\vdots \\
\Tm_{T} &=& \Sm_{\lfloor T-1\rfloor_M+1}(0,\ldots,0,u_{LM + (M-1)M + 1},\ldots,u_{M^2 + LM}).
\end{eqnarray}
\end{enumerate}

Note that the space-time code $\Sm(\vv)$ takes input vector of length $M^2$, and therefore, the number of zero elements in $\vv$ using the above notations equal $M^2$ minus the number of nonzero elements. 

One has the following.

\begin{theorem}
The new TAST code constructed as above over $M$ transmit antennas and $T=2M+L-1$ transmission intervals has a rate of $R=\frac{M(M+L)}{2M+L-1}$ symbols per channel use and achieves full diversity whenever the constituent TAST code achieves full diversity. 
\end{theorem}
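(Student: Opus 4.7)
The rate claim is immediate from the construction: the code packs $K = M(M+L)$ information symbols into an $M \times T$ codeword with $T = 2M+L-1$, so $R = K/T = M(M+L)/(2M+L-1)$ QAM symbols per channel use.

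For full diversity, linearity of the map $\uv \mapsto \Tm(\uv)$ lets us work with the difference $\vv = \uv - \uv'$ and reduce the problem to showing that $\Tm(\vv)$ has rank $M$ for every nonzero $\vv \in \IC^K$. Inspection of the construction reveals that the $t$-th column of $\Tm(\vv)$ has the form $\Sm_{j_t}(\vv^{(t)})$, where $j_t := \lfloor t-1\rfloor_M + 1$ cycles through $\{1,\ldots,M\}$ and $\vv^{(t)} \in \IC^{M^2}$ is the (appropriately zero-padded) windowed sub-vector of $\vv$ fed into $\Sm$ at time $t$; write $\Omega_j := \{t : j_t = j\}$ for the set of time indices invoking the $j$-th column encoder $\Sm_j$ of $\Sm$.

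The backbone of the plan is to realize the columns $\Sm_1(\vv^*),\ldots,\Sm_M(\vv^*)$, for some nonzero $\vv^* \in \IC^{M^2}$ depending on $\vv$, as explicit linear combinations of columns of $\Tm(\vv)$. Once this embedding is in place, the full-diversity hypothesis on $\Sm$ yields $\mathrm{rank}\,\Sm(\vv^*) = M$, the column span of $\Tm(\vv)$ has dimension at least $M$, and $\mathrm{rank}\,\Tm(\vv) = M$ follows. In the base case $L=0$, $\Omega_j = \{j,\,M+j\}$ (with $\Omega_M = \{M\}$), and the head/tail windows $\vv^{(j)}$ and $\vv^{(M+j)}$ are supported on complementary index subsets satisfying $\vv^{(j)} + \vv^{(M+j)} = \vv$ exactly. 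Linearity of $\Sm_j$ then gives $\sum_{t\in\Omega_j}\Sm_j(\vv^{(t)}) = \Sm_j(\vv)$, so one takes $\vv^* = \vv$ and closes the argument at once.

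The main obstacle is extending this from $L=0$ to $L \geq 1$, where the windows $\{\vv^{(t)}\}_{t\in\Omega_j}$ overlap rather than partitioning $\vv$, so the unweighted sum no longer reproduces $\vv$. The plan is to select $\vv$-dependent coefficients $c_{j,t}$ so that $\sum_{t\in\Omega_j} c_{j,t}\,\vv^{(t)}$ equals a common nonzero $\vv^* \in \IC^{M^2}$ for every $j$, exploiting the tightly structured overlap dictated by the rule ``each symbol appears in exactly $M$ positions of exactly one thread.'' Verifying that this linear system in the $c_{j,t}$ always admits a solution with $\vv^* \neq 0$ whenever $\vv \neq 0$ is the technical crux; where the common-$\vv^*$ route is obstructed on a particular $\vv$, one falls back on the standard TAST diversity argument of Section~\ref{TAST}, invoking the algebraic independence of the thread multipliers $\phi_j$ over the number field containing $\Mm_1,\ldots,\Mm_M$ to preclude inter-thread cancellation and thereby certify $M$ linearly independent columns of $\Tm(\vv)$ directly.
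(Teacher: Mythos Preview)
Your rate computation and the reduction to showing $\mathrm{rank}\,\Tm(\vv)=M$ for every nonzero $\vv$ are correct and match the paper. The divergence is in how you certify that rank, and this is where your plan carries a genuine gap for $L\ge 1$.

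The paper's argument is much simpler than yours and sidesteps the obstacle you identify. It never tries to synthesize the columns $\Sm_1(\vv^*),\ldots,\Sm_M(\vv^*)$ for a \emph{common} $\vv^*$ as linear combinations of the columns of $\Tm(\vv)$. Instead it selects a submatrix: pick any index $j$ with $v_j\neq 0$; by the construction rule, $v_j$ appears in exactly $M$ positions of $\Tm(\vv)$, all lying on a single thread and occupying $M$ distinct columns. The paper observes that this $M\times M$ block of $\Tm(\vv)$ is, up to a column permutation, a codeword of the constituent code $\Sm$ evaluated at a nonzero input (one of whose coordinates is $v_j$), and hence has rank $M$ by the full-diversity hypothesis on $\Sm$. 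That finishes the proof in one line, uniformly in $L$.

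The point you are missing is that passing to a \emph{submatrix} rather than a linear combination makes the problem local: you only need one window of $M$ columns of $\Tm(\vv)$ to be full rank, and the sliding structure of the construction hands you such a window directly, indexed by any nonzero symbol. Your attempt to force all $M$ reconstructed columns $\Sm_j(\vv^*)$ to share a single global $\vv^*$ is exactly what manufactures the overlap difficulty for $L\ge 1$; the paper never needs a global $\vv^*$, so the ``technical crux'' you flag simply does not arise. Your fallback on the algebraic independence of the thread multipliers $\phi_j$ is in the right spirit, but as stated it is a second, separate argument rather than a completion of the first---and it is strictly more work than just invoking the full-diversity property of the $M\times M$ block already sitting inside $\Tm(\vv)$.
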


\begin{proof}
To prove full diversity, we consider the rank of the difference between distinct code word matrices, $\Tm(\uv_1) - \Tm(\uv_2)$ with $\uv_1\neq \uv_2$. Equivalently, we can consider $\Tm(\uv)$ with $\uv \neq \zerov$. Since $\uv \neq \zerov$, it follows that at least one $u_j$ is nonzero. By construction, each $u_j$ appears exactly in $M$ positions along one thread in $\Tm$. Now, consider the square $M\times M$ submatrix that contains those positions of $u_j$. By construction, it is the constituent TAST matrix $\Sm$ (up to a column permutation). Since the constituent TAST code achieves full diversity, it follows that the rank of the considered submatrix equals $M$ and the new TAST code $\Tm$ achieves full diversity. 
\end{proof}

The following example illustrates the idea of creating an ISI-channel in each thread in order to achieve diversity while reducing detection complexity. 

\begin{example}
Let $M=3, L=1$ giving $T=2M-1+L=6$ and $K=M(M+L)=12$. Let $u_1, \ldots, u_K \in $ some QAM constellation in $\ZZ[i]$, the ring of Gaussian integers.  Further, let $\theta$  and $\phi$ be the corresponding Diophantine numbers of the TAST construction \cite{simpletast06} (i.e., $\{1,\theta,\theta^2\}$ are algebraically independent over $\QQ(i)$ and $\{1,\phi,\phi^2\}$ are algebraically independent over $\QQ(i,\theta)$). Then, the following construction is one example of a simplified tree-structured TAST code:
\begin{equation}
\left( \begin{array}{cccccc}
u_1 & \phi^{2/3} (u_4+\theta u_3) & \phi^{1/3}(u_7 + \theta u_6 + \theta^2 u_2) & u_{10}+\theta u_9 + \theta^2 u_{5} & \phi^{2/3} (\theta u_{12}+\theta^2 u_8) & \phi^{1/3} \theta^2 u_{11} \\
\phi^{1/3} u_2 & u_5 + \theta u_1 & \phi^{2/3}(u_8 + \theta u_4 + \theta^2 u_3) & \phi^{1/3} (u_{11}+\theta u_7 + \theta^2 u_{6}) & \theta u_{10} + \theta^2 u_{9} & \phi^{2/3} \theta^2 u_{12} \\
\phi^{2/3} u_3 & \phi^{1/3} (u_6 + \theta u_2) & u_9 + \theta u_5 + \theta^2 u_1 &\phi^{2/3}(u_{12} + \theta u_8 + \theta^2 u_{4} & \phi^{1/3} (\theta u_{11}+\theta^2 u_7) &   \theta^2 u_{10} \end{array} \right).
\end{equation}
Note that each symbol $u_j$, $j=1,\ldots,12$ appears exactly $M=3$ times in one of the $3$ threads, guaranteeing thus its full diversity level. Under this representation, the simplified tree-structured TAST code can be considered as a finite-state machine (i.e., convolutional encoder over the complex numbers field) where the first thread is generated by the convolutional encoder $(1,\theta,\theta^2)$, the second thread by $\phi^{1/3}(1,\theta,\theta^2)$ and the third thread by $\phi^{2/3}(1,\theta,\theta^2)$. Note also that, similar to the fractional rate loss in convolutional codes due to equal error protection, we have a fractional rate loss that vanishes when the block length increases. 

\end{example}

\subsection{New Constructions Properties}
The main property of the new codes is that the equivalent code matrix $\Mcb$ is upper triangular which is easily shown form the construction method. This largely simplifies the QR decomposition of the composite channel and code matrix $(\Id_T \otimes \Hm ) \Mcb$ which is essential to ML detection using sphere or sequential decoding or other sub-optimal near-ML decoding algorithms such as the Babai decoder \cite{dec03}. 

This reduction in complexity is achieved at the expense of rate reduction. While the constituent TAST code has a full rate of $M$ symbols per channel use, the new construction achieves a rate of $R=\frac{M(M+L)}{2M+L-1}$, which only asymptotically tends to the full rate when $L$ increases. 
 
 As for the other code properties, such as peak-to-average power ratio or the coding gain, the new construction inherits the properties of the constituent TAST code. In particular, one can show that if the constituent TAST code $\Sm$ has a non-vanishing determinant (NVD)\footnote{The NVD property is useful for achieving the optimal diversity-multiplexing tradeoff, or DMT,  in MIMO systems (see for example \cite{perfectstc} and references therein).} then the new construction $\Tm$ also has the NVD property. This can be seen from the construction of $\Tm$ where it has to contain (up to column permutations) the submatrix $\Sm$ with at least one nonzero thread, which gives the NVD property.  While the new codes are not DMT optimal for finite $L$ due to their rate loss, using the NVD property, one can show their optimality in the limit of $L$. 

The new TAST codes with simplified tree structures inherit other desirable properties form their TAST brethren such as their availability for any number of transmit and receive antennas, any channel-state information at the receiver and transmitter and any input alphabets \cite{elgamal:damen02}. 

Finally, note that one can increase the rate (while keeping the upper triangular form of $\Mcb$) by cutting the termination tail of $M-1$ columns in $\Tm$, in which case one obtains unequal error protection (that results from unequal diversity levels) for the different QAM symbols. 

\section{Simulation Results}\label{conc}

\begin{figure}
  \includegraphics[width=1.0 \textwidth]{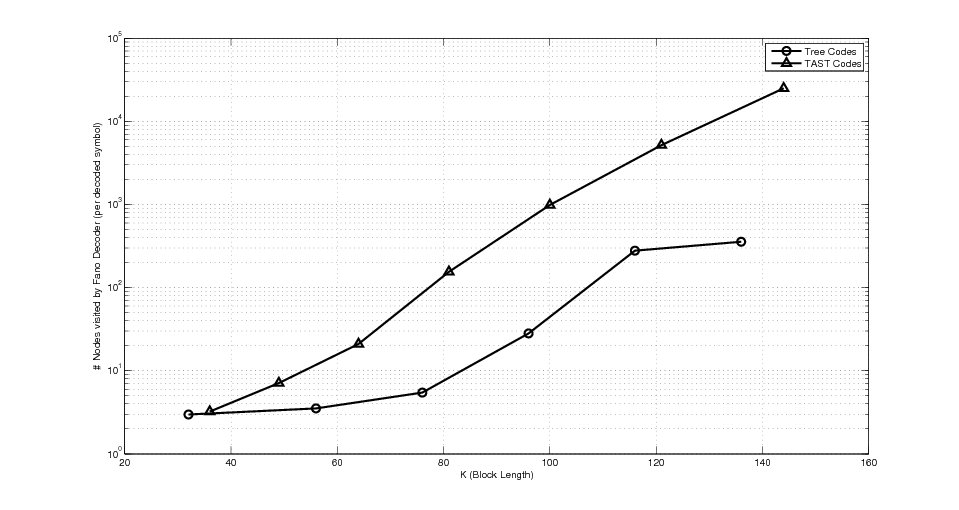}
  \caption{The effect of block length on the complexity of the Fano Deocoder applied to the proposed tree codes and the original TAST codes.}
\label{Nodes_VS_K}
\end{figure}

\begin{figure}
  \includegraphics[width=1.0 \textwidth]{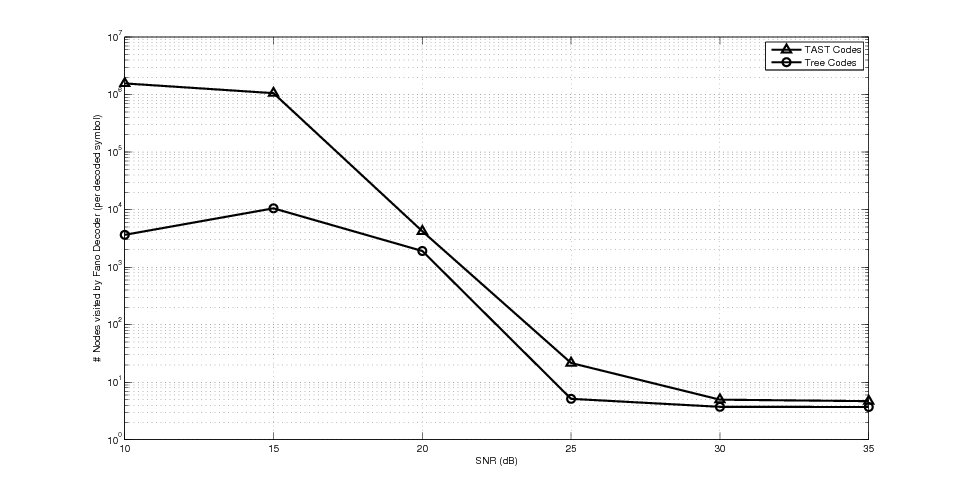}
  \caption{The complexity of the Fano Deocoder applied to the proposed tree codes and the original TAST codes.}
\label{Nodes_VS_SNR}
\end{figure}

\begin{figure}
  \includegraphics[width=1.0 \textwidth]{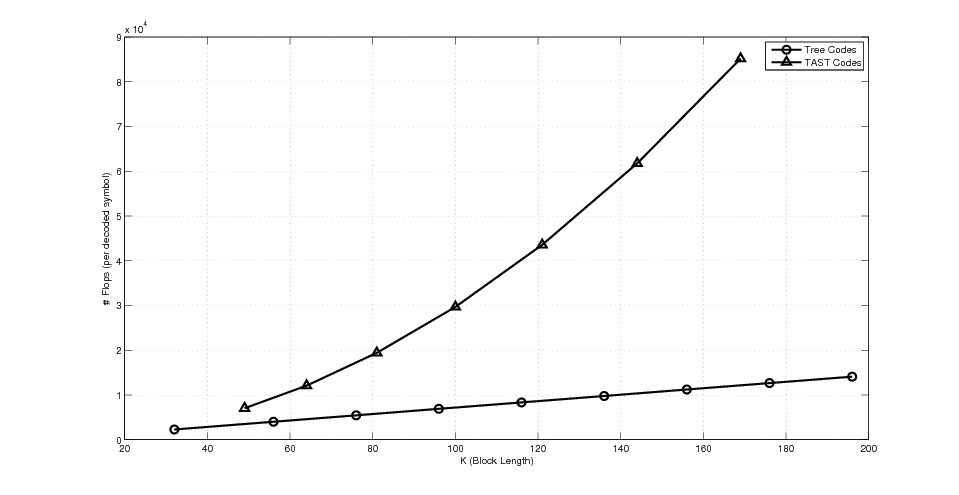}
  \caption{The complexity of the QR Decomposition using Givens Rotations of the proposed tree codes and the original TAST codes.}
\label{QR_VS_K}
\end{figure}

Except as noted to the contrary, we employ BPSK modulation in all our simulations. Simulation results presents the degradation in complexity achieved through the proposed codes over the original TAST codes in both stages, the preprocessing stage (represented in the QR decomposition) and the tree search stage (represented in the number of nodes visited by the Fano Decoder). Unless specified otherwise, the Diophantine numbers are chosen according to \cite{simpletast06} to be algebraic integers such that the TAST code achieves full diversity.
Fig~\ref{Nodes_VS_K} and ~\ref{Nodes_VS_SNR} show the reduction in complexity in the search stage achieved by the proposed tree codes over the original TAST codes which is clear especially for large block lengths and also at low SNR. This is because of the sparsity of the composite $\mathcal{HG}$ matrix which facilitates the role of the Fano Decoder.\\
Fig ~\ref{QR_VS_K} shows the main reduction in complexity in the preprocessing stage (the QR decomposition). This was mainly achieved by applying QR decomposition through Givens rotations which introduces one zero at a time which helps in investing the ready zeros in the tall upper-triangular composite matrix $\mathcal{HG}$. The exact number of flops needed per decoded symbol is calculated to be $[N(M - 1) + \frac{N-M}{2M}]K + [\frac{N-M}{2M}]K^2$. So, it's obvious that at $M=N$, the number of flops required per decoded symbol reduces to $M(M-1)K$ which is of $\mathcal{O}(K)$ instead of $\mathcal{O}(K^2)$ for the original TAST codes. The figure shows this linear relation with the block length.

\section{Conclusion}\label{conc}
In this paper, we have presented a new methodology for constructing TAST codes with simplified tree structure. The new codes have the interesting property of upper triangular equivalent code matrix which simplifies the subsequent tree search algorithms such as sequential decoders. The complexity reduction comes at a price of a fractional rate loss, when compared to the full rate TAST codes, that vanishes when the block size increases. The proposed codes present an interesting compromise between rate and complexity which situates them as an attractive alternative of low rate low complexity orthogonal and quasi-orthogonal designs and high-rate high complexity TAST codes or codes from cyclic division algebra. Further simplification of the proposed codes is under consideration as well as their extension to distributed relay network  \cite{delay-tolerant}. 

\addcontentsline{toc}{article}{References}

\end{document}